\documentclass{article}

\usepackage[latin1]{inputenc}  
\usepackage{amsmath,amsfonts,amsthm,amssymb}  
\usepackage{dsfont}			
\usepackage{thmtools}		
\usepackage{enumerate}
\usepackage{graphicx}			

\usepackage{subcaption}			


\usepackage[latin1]{inputenc}                       

\newtheorem{theorem}{Theorem}[section]
\newtheorem{proposition}[theorem]{Proposition}

\newtheorem*{postulate}{Postulate}

\theoremstyle{definition}
\newtheorem*{definition}{Definition} 

\newtheorem*{example}{Example}

\declaretheoremstyle[
spaceabove=6pt, spacebelow=6pt,
headfont=\normalfont\bfseries,
notefont=\normalfont\bfseries, 
notebraces={}{},
bodyfont=\normalfont\itshape
]{Estilo1}

\declaretheorem[style=Estilo1,numbered=no,name=\!\!]{ThmName} 

\newcommand{\Schrodinger}			{Schr\"o\-din\-ger}

\def\im	 				{\mathrm{i}}

\newcommand\ket[1]					{\left| #1 \right\rangle}
\newcommand\braket[2] 			{\left\langle #1 | #2 \right\rangle}
\newcommand\norm[1]					{\left\| #1 \right\|}

\usepackage{mathtools} 		
\DeclarePairedDelimiter{\Prod}{\langle}{\rangle}  

\newcommand\R               {\mathds{R}}
\newcommand\C               {\mathds{C}}                    
\newcommand\PP               {\mathds{P}}

\begin{document}

\title{Quantum fractionalism: the Born rule as a consequence of the complex Pythagorean theorem} 

\author{Andr\'e L. G. Mandolesi 
               \thanks{Instituto de Matemática e Estatística, Universidade Federal da Bahia, Av. Adhemar de Barros s/n, 40170-110, Salvador - BA, Brazil. E-mail: \texttt{andre.mandolesi@ufba.br} 
               }}

\date{\today}

\maketitle

\abstract{Everettian Quantum Mechanics, or the Many Worlds Interpretation, lacks an explanation for quantum probabilities. We show that the values given by the Born rule equal projection factors, describing the contraction of Lebesgue measures in orthogonal projections from the complex line of a quantum state to eigenspaces of an observable. Unit total probability corresponds to a complex Pythagorean theorem: the measure of a subset of the complex line is the sum of the measures of its projections on all eigenspaces.

To show that projection factors can work as probabilities, we postulate the existence of a continuum infinity of identical quantum universes, all with the same quasi-classical worlds.
In a measurement, these factors give the relative amounts of worlds with each result, which we associate to frequentist and Bayesian  probabilities. This solves the probability problem of Everett's theory, allowing its preferred basis problem to be solved as well, and may help settle questions about the nature of probability.
} 

\vspace{.5em}
\noindent
{\bf Keywords:} foundations of quantum mechanics, Everett interpretation, many worlds interpretation, Born rule, interpretations of probability

\section{Introduction}

Since the early days of Quantum Mechanics, its probabilistic nature has baffled many physicists, most notoriously Einstein. This motivated alternatives to the Copenhagen interpretation (CQM), like hidden-variables theories, which remain unsuccessful \cite{Auletta2000,Home1997,Wheeler2014}. Today most physicists accept the theory as intrinsically probabilistic, but problems remain, regarding quantum measurements and the quantum-classical transition. They have gained new relevance as experimental and theoretical developments, such as quantum computation and information \cite{Nielsen2010}, have pushed the limits of the theory to increasingly larger scales.

An alternative to CQM which is particularly well suited to be applied to macroscopic systems is Everettian Quantum Mechanics (EQM), or the Many Worlds Interpretation \cite{EverettIII1957,DeWitt1973}. It rejects the measurement postulate and describes quantum measurements in terms of entanglement: all results happen, but entangled to different states of the observer, in a superposition of distinct worlds. But as EQM is fully deterministic, it faces the problem of explaining the probabilities observed in quantum experiments.

Many attempts have been made to get the Born rule from more fundamental principles. The most famous is Gleason's theorem \cite{Gleason1957}, but it relies on a hypothesis which can not be justified before we know how probabilities can emerge in EQM. Everettian Decision Theory (EDT) \cite{Deutsch1999,Saunders2010a,Wallace2012} has gained some acceptance among Everettians, but it has many problems \cite{Dawid2015,Mandolesi2019}.

To get probabilities in EQM, we first note that they have the same values as projection factors, which describe how Lebesgue measures in a ray (the complex line of a quantum state, minus the origin) contract when orthogonally projected onto eigenspaces of an observable. A complex Pythagorean theorem \cite{Mandolesi_Pythagorean} says the measure of a subset of a ray equals the total measure of its projections on all eigenspaces, corresponding to the condition of unit total probability.  

This is not enough to ensure projection factors can work as probabilities, and we add a new postulate to EQM.
This goes against a certain purist view of the theory, focused on the promise of deriving all of quantum mechanics from its unitary part. We, however, see EQM as a step towards a coherent quantum theory, whose development should not be constrained by a lesser goal.

We postulate the existence of a continuum infinity of universes, but different from that proposed by Deutsch \cite{deutsch1985quantum,Deutsch1997}, as our universes are always identical, corresponding to all states in a ray, and evolve deterministically in the same way. When a quantum experiment is carried out in one of them, it is also performed in the others, producing the same worlds in all of them, and in the end we have a continuum infinity of worlds for each result. 

In terms of Lebesgue measures, relative amounts of worlds with each result are shown to equal the corresponding projection factors, so they have the same values as the Born rule probabilities.
In nearly all worlds they correspond to observed frequencies, and can be used for decisions and inferences, so this solves the probability problem. And as our method does not rely on a previous solution of EQM's preferred basis problem, it removes an obstacle \cite{Dawid2015} to Wallace's decoherence based solution of this other problem \cite{Wallace2012}.

\emph{Quantum Fractionalism} is this interpretation of quantum probabilities as relative amounts (or fractions) of worlds resulting from a quantum measurement. It provides, at least for such cases, an arguably better concept of probability than Frequentism or Bayesianism. These theories give good accounts of how probabilities work, but fail to provide fully satisfactory definitions of what they are, a notoriously difficult problem in the philosophy of probability \cite{Gillies2000,Mellor2005}. If classical probabilities can be traced back to quantum origins \cite{Albrecht2014}, our interpretation might help clarify the nature of probability in general. 

Getting probabilities as world measures is not a new idea. The problem is justifying and interpreting the measure: does it give a notion of amount or size for sets of worlds, or is it something else? Is small measure linked in some sense to negligibility?
Everett \cite{EverettIII1957} tried attributing a measure to worlds, but its meaning was not clear.
Deutsch's continuum of universes \cite{deutsch1985quantum} was postulated to have a size measure, with values chosen to give the desired result. 
Greaves'  version of EDT \cite{Greaves2004} uses `caring measures', that show how much we should care about our future selves, but do not represent amounts of worlds. 
Araújo \cite{araujo2019probability} links EDT's subjective probabilities to objective ones described by a size measure given by a $p$-norm, but uses empirical results to justify $p=2$.

Our measure has the advantages of having a concrete interpretation (relative amounts in sets of worlds), and that its values are obtained in a simple way from the geometry of a complex Hilbert space, through a natural correspondence between worlds and their quantum states.
Our method also shows that in non-complex Hilbert spaces, whose use has been proposed by some authors \cite{Stueckelberg1960a,Finkelstein1962,Adler1995}, an Everettian Born rule would disagree with the observed values. 

Section \ref{sec:Preliminaries} reviews CQM, EQM and probability theories. 
Projection factors and the complex Pythagorean theorem are described in section \ref{sc:projection factors}.
In section \ref{sc:universes and worlds} we present our postulate, our universes and worlds, link projection factors to fractions of worlds, and discuss non-complex Hilbert spaces.
In section \ref{sc:Quantum Fractionalism} we show these fractions can be interpreted as probabilities, and discuss the epistemic problem. 
Section \ref{sc:Deutsch} compares our approach to Deutsch's.
In section \ref{sc:Physical hypotheses} we discuss our postulate, what we can infer from it, and whether it might follow from more basic principles.
We conclude with some remarks in section \ref{sc:Conclusion}.

\section{Preliminaries}\label{sec:Preliminaries}

In this section we present EQM in its modern form, which brings decoherence into the fold, the problems of CQM that motivated it, and the new ones it brings. We also briefly review some theories of probability and their difficulties.

\subsection{Problems of the Copenhagen interpretation}

Despite its experimental success and wide acceptance, CQM has well known theoretical problems, and the literature on the subject is quite extensive (see, for example, \cite{Auletta2000,Home1997,Wheeler2014} and references therein).

One is the \emph{measurement problem}: the measurement postulate reflects accurately what is observed in experiments, but is ambiguous in terms of when it should come into play. It sets measurements apart from other quantum processes,  as only in them unitary evolution governed by \Schrodinger's equation gives way to the probabilistic collapse of the quantum state. But there is no clear definition of which characteristics a process must have to trigger such change and count as a measurement. 

If the microscopic system, measuring device and observer are collections of particles interacting with each other according to quantum laws, there ought to be an explanation of what switches their evolution from one governed by a deterministic equation to a probabilistic one. Or at least a clear rule specifying when this happens, without recourse to high level terms like ``measurement'' (how is an electron to `know' the particles interacting with it belong to a measuring device and so it should stop obeying \Schrodinger's equation?). 

Many answers as to what constitutes a quantum measurement have been proposed, each raising even more doubts.
If measurements consist in reducing ignorance about a quantum state, must they involve conscious beings or can a device cause the wavefunction to collapse? If it is only a matter of information transferring between systems, a particle being affected by the state of another should count as a measurement, so why does it not cause a colapse?
If it is an issue of a macroscopic system interacting with a microscopic one, how big is macroscopic enough, and why is size relevant if only close particles interact significantly?
If it involves information spreading irreversibly, when is the limit of reversibility reached, and how does it trigger the change in quantum behavior? 
Also, how exactly does the collapse happen?

This problem is connected to that of the \emph{quantum-classical transition}. What is the range of applicability of Quantum Mechanics? It works for microscopic systems, but what happens as the number of particles grows? Does it gradually turn into Classical Mechanics, as is the case with Relativity as velocities decrease? Some quantum relations become classical ones if we take averages and let $\hbar\rightarrow 0$, but not everything transitions well. In the usual view, macroscopic quantum superpositions should not happen, lest we have \Schrodinger\ cats, but it is not clear what might eliminate them as systems get bigger. Decoherence has been suggested as a solution \cite{Joos2003,Schlosshauer2007,Zurek2002}, but even if it eliminates interference between components of a macroscopic superposition it does not explain the disappearance of all but one of them. This problem has led some to consider Quantum Mechanics valid only for microscopic systems, with a more universal theory being needed to connect quantum and classical physics. 

For a long time most physicists have brushed aside such difficulties, believing that these were philosophical questions of little physical relevance, that they had been settled in the famous Bohr-Einstein debate, or that these were minor flaws in an otherwise very precise theory, which would end up being fixed. But as quantum theory reaches its centenary the problems remain, and gain increasing relevance as new theoretical and experimental advances allow us to explore the limits of the theory in ways that were inconceivable a few decades ago.

\subsection{Everettian Quantum Mechanics}\label{sec:EQM}

In 1957, H. Everett III \cite{EverettIII1957} took a fresh look at what would happen if quantum theory was applied to macroscopic systems and measurements were regular quantum processes, proposing what became known as Everettian Quantum Mechanics, or the Many Worlds Interpretation \cite{DeWitt1973}. 

Before detailing and justifying his ideas, we give a general view. 
In EQM, quantum theory is universally valid, for small and large systems alike, but the measurement postulate is rejected, with all systems evolving deterministically at all times according to \Schrodinger's equation.  This  leads to macroscopic superpositions, but does not contradict our classical experience, as EQM explains why we do not perceive them. 
Quantum measurements are basically entanglements at a macroscopic scale, and split the quantum state of the Universe into components called branches or worlds, with all possible results happening in some of them. There is no collapse of the quantum state, but when an observer interacts with the outcome of an experiment he splits into different versions of himself, each seeing only the result of his branch, as if the collapse had happened. 

This may seem like an almost mystical departure from conventional quantum theory, but Everettians see it as a direct consequence of unitary quantum mechanics (the usual formalism minus the measurement postulate) applied to macroscopic systems: linearity of Schr\"{o}dinger's equation and algebraic properties of the tensor product lead naturally to branching. They claim all of quantum mechanics follows from its unitary part, with no need for new postulates.

Let us detail the branching process. In EQM, a measurement is just quantum entanglement of the measuring device with whatever is being measured. More precisely, a \emph{measuring device} for a basis $\{\ket{i}\}$ of a system is any apparatus, in a quantum state $\ket{D}$, which interacts in such a way that its state becomes correlated with elements of this basis, i.e. if the system is in state $\ket{i}$ the composite state evolves as
\begin{equation*}
\ket{i}\otimes\ket{D}\ \  \longmapsto\ \  \ket{i}\otimes\ket{D_i},
\end{equation*}
where $\ket{D_i}$ is a new state of the device, registering result $i$ (or simply reflecting in some way the fact that it interacted with $\ket{i}$).
Linearity of Schr\"odinger's equation implies that, if the system is in a superposition $\ket{\Psi}=\sum_i c_i\ket{i}$, the composite state evolves as
\begin{equation}\label{eq:device branches}
\ket{\Psi}\otimes\ket{D} = \Big(\sum_i c_i\ket{i}\Big)\otimes\ket{D} \ \  \longmapsto\ \   \sum_i c_i\ket{i}\otimes\ket{D_i}.
\end{equation}

This final state is to be accepted as an actual quantum superposition of macroscopic states. But it will not be perceived as such by an observer looking at the device, as, by the same argument, his state $\ket{O}$ will evolve into a superposition,
\begin{equation}\label{eq:observer branches}
\Big(\sum_i c_i\ket{i}\otimes\ket{D_i}\Big)\otimes\ket{O} \ \  \longmapsto\ \   \sum_i c_i\ket{i}\otimes\ket{D_i}\otimes\ket{O_i},
\end{equation}
with $\ket{O_i}$ being a state in which he saw result $i$. The interpretation is that he has  split into different versions of himself, each seeing a result. 

By linearity, the components $\ket{i}\otimes\ket{D_i}\otimes\ket{O_i}$ evolve independently, and interference is negligible if they are distinct enough, as tends to be the case with macroscopic systems. 
Each $\ket{O_i}$ evolves as if the initial state had been $\ket{i}\otimes\ket{D}\otimes\ket{O}$, so he does not feel the splitting or the existence of his other versions, and for him it is as if the system had collapsed to $\ket{i}$.
Each component is called a \emph{world} or \emph{branch}, and this evolution of one world into a superposition of many is called \emph{branching}. So in EQM all possible results of a measurement do happen, but in different worlds. 

Some physicists reject such `splitting of worlds' for seeing it as a drastic process that creates whole new worlds, with new particles and energy. But it is just an ordinary evolution of a state into a quantum superposition. An electron with spin up can evolve into a superposition of spins up and down, with the new component appearing smoothly as its coefficient increases from 0, and this is not seen as the creation of a new electron with spin down. Likewise, branchings are normal quantum processes in which the state of the Universe gradually evolves into a superposition of macroscopically distinct states of the same Universe.


The description of EQM presented here is somewhat incomplete, and neo-Everettians \cite{Saunders2010a} include decoherence as an essential ingredient. The reason is that, while problems of CQM disappear in EQM, new ones come along, such as the preferred basis and probability problems.

\subsubsection{Preferred basis problem}\label{sec:Preferred_Basis_Problem}

This problem is how to find a natural way to decompose a macroscopic quantum state into branches behaving like the classical reality we observe (even if not all of them, and not all the time).

Decoherence is seen as a mechanism through which a (quasi-)classical world might emerge from a quantum universe \cite{Joos2003,Schlosshauer2007,Zurek2002}. Wallace \cite{Saunders2010a,Wallace2012} has proposed a solution to the problem by combining EQM with the decoherent histories formalism \cite{Gell-Mann1990,Gell-Mann1993}. But this requires solving first the probability problem, as the probabilistic interpretation is used to justify approximations in decoherence  \cite{Baker2007,Zurek2005}. 
Wallace \cite[pp.\,253--254]{Wallace2012} claims the Hilbert space norm measures significance even without probabilities, but this has been contested \cite{Dawid2015,Mandolesi2017}.

Still, it seems plausible that a branch decomposition, if it can be obtained, should be along these lines.
An important characteristic of such decoherence based approach is that the decomposition is not clear-cut or unique. In EQM measurements lose their special status, becoming just interactions that produce entanglement. But particles get entangled all the time, and if each such process counts as a measurement of a particle by another, branching becomes a pervasive phenomenon, with worlds splitting all the time into a myriad of others. And lots of them will be nearly identical, differing only in the states of a few particles, which is not enough to ensure they evolve with negligible interference.

As Wallace \cite{Wallace2012}  argues, a coarse-graining of similar branches can solve these difficulties, bringing more stability to the decomposition and ensuring the resulting worlds are distinct enough to have negligible interference. This, however, introduces some arbitrariness into the decomposition, as the resulting worlds would depend on the chosen fineness of grain.

\subsubsection{Probability problem}\label{sec:Probability_Problem}

In CQM, the connection between theory and experiment is provided by the measurement postulate, including the Born rule.

\begin{ThmName}[Born Rule]
In CQM, the probability of obtaining result $i$ when measuring  a normalized state $\ket{\Psi}=\sum_i c_i\ket{i}$ in an orthonormal basis $\{\ket{i}\}$ is   
\begin{equation}\label{eq:Born rule}
p_{\Psi,i}=|c_i|^2=|\!\braket{i}{\Psi}\!|^2.
\end{equation}
\end{ThmName}

Normalized states are convenient to simplify some formulas, but they are not physically special in any way. 
Claims that normalization is necessary for unit total probability are easily circumvented: without it, we can write 
$\Psi = \sum_{i} \psi_i$, where $\psi_i=c_i\ket{i}$, and replace \eqref{eq:Born rule} with
\begin{equation}\label{eq:Born unnormalized}
p_{\Psi,i} =\frac{|\!\braket{\psi_i}{\Psi}\!|^2}{{\norm{\psi_i}^2\norm{\Psi}^2}} = \frac{\|\psi_i\|^2}{\|\Psi\|^2}. 
\end{equation}

In EQM, any result with $c_i \neq 0$ is obtained with certainty, in some world(s). The \emph{probability problem} is to reconcile this with experiments, which suggest that results are probabilistic and follow the Born rule.
In this context, one usually refers to the values in \eqref{eq:Born rule} as \emph{Born weights}.

The problem has a qualitative aspect: how can probabilities emerge in a deterministic theory? In Classical Mechanics processes can appear random due to ignorance of details, but for EQM we need probabilities which apply even if the quantum state and its evolution are perfectly known. 
Wallace \cite{Wallace2012} defends an operational and functional definition of probability, via decision theory and Bayesian inference.
Vaidman \cite{Vaidman1998} claims there is a \emph{self-locating uncertainty} in the time between processes \eqref{eq:device branches} and \eqref{eq:observer branches}, as branching has already happened, but each version of $\ket{O}$ is still ignorant as to which branch he is in. 

There is also a quantitative aspect of accounting for probability values. 
The idea that a measurement with $n$ results produces $n$ branches seems natural, leading to probability $\frac{1}{n}$ for all results, in disagreement with experiments.
Wallace \cite{Wallace2012} dismisses this by saying there is no good way to count branches: the final state can have more than $n$ branches, as unpredictable extraneous interactions can cause additional branchings, with an uneven and shifting distribution among the results. 
Coarse-graining reduces and stabilizes the number of branches, but makes it somewhat arbitrary, depending on the chosen fineness of grain.

Everett \cite{EverettIII1957} proved that if a measure can be attributed to branches, and satisfies some hypotheses, like being preserved under finer decompositions, it must agree with the values in \eqref{eq:Born rule}. And, as the number of measurements tends to infinity, the total measure of  branches deviating from the Born rule tends to $0$. But for finite experiments this only means branches deviating beyond a given error have small measure, and without an interpretation (probabilistic or otherwise) for such measure this does not make them any less relevant.
The same difficulty affects a similar idea by Graham \cite{Graham1973} and Wallace's use of decoherence to solve the preferred basis problem, which is a prerequisite for his decision theoretic proof of the Born rule \cite{Wallace2012}. 

Gleason \cite{Gleason1957} also obtains the Born rule, assuming the probability of a branch does not depend on the other branches: if a state decomposes as $\psi=\psi_1+\psi_2$ or $\psi=\psi_1+\psi_3+\psi_4$, depending on the orthogonal basis chosen, the probability for $\psi_1$ is the same in both. This hypothesis may seem reasonable if one has the Born rule in mind, but it is not natural for probability measures in general. It is violated by a counting measure, for example, and in section \ref{sc:real Hilbert space} we show it might not hold for quantum probabilities in non-complex Hilbert spaces.

Not knowing how probabilities can emerge in EQM, we can not assume they satisfy Everett's or Gleason's hypotheses. 
Other attempts \cite{Zurek2005,Albert1988,Buniy2006} have been made to get the right probabilities, without much success. 

Everettian Decision Theory \cite{Deutsch1999,Saunders2010a,Wallace2012}, which tries to get the Born rule in terms of subjective probabilities, has gained some acceptance, but faces many difficulties \cite{Dawid2015,Mandolesi2019}.
Roughly, it says it is rational to care about future branches in proportion to their Born weights. An agent should accept a bet in which $\frac{\sqrt{3}}{2}\ket{\uparrow}+\frac{1}{2}\ket{\downarrow}$ is measured, she gains \$1000,00 in up branches, and loses \$2000,00 in the rest, as the expected utility is $0.75\cdot 1000-0.25\cdot 2000>0$. For simplicity, we have equated utility with monetary value, but an utility function on rewards can be used for flexibility. 
It is important to note that the rational behavior prescribed by EDT only applies to pre-measurement decisions. It does not tell a post-measurement version of the agent who lost \$2000,00 that rationally she should be glad her Born weight is only $0.25$. She can not even take solace in thinking there are 3 times as many versions of herself who won \$1000,00, as Born weights are not interpreted in terms of amounts of branches.

\subsection{Philosophies of probability}

Even if we have, in most cases, an intuitive understanding of probabilistic statements, it is notoriously difficult to describe precisely what probability is.
Formally, it is a function in an event space satisfying Kolmogorov's axioms, but it still needs to be linked to how probabilities are used in daily life or science (areas of regions in a unit square satisfy the axioms, but are not probabilities per se).
Various theories of probability are described in \cite{Gillies2000,Mellor2005}, and some are discussed in connection with EQM  in \cite{Wallace2012}. Here we briefly sketch the main interpretations and their problems.

The simplest interpretation of probability may be the frequentist one. In it, saying each result of a fair die has probability $\frac{1}{6}$ means that, as the die is cast an increasingly large number of times, the relative frequency of each result will tend to $\frac{1}{6}$. 
But making precise sense of this is trickier than it seems. Even if we were to throw the die 6 million times, we can not say each result will occur 1 million times, give or take a thousand, only that it is highly probable this will happen. So even if relative frequencies can be used to measure probabilities, there is a circularity in using them to define probability.
Also, this depends on the possibility of repeating a test an arbitrary number of times, under equal conditions, and a die thrown 6 million times may get damaged along the way, breaking its symmetry. And frequentism does not apply to one time events: it can not make sense of the probability that a given candidate will win next election, or that stock prices will rise tomorrow.

Bayesianism is another interpretation, in which, roughly speaking, if Alice says the probability of getting result 6 in a die is $\frac{1}{6}$, it only means she would not be willing to bet on it at odds worse than 5 to 1. Bob, believing the die is loaded, might accept worse odds, and for him the probability would be different. Instead of being an objective concept, in this view probability only describes the credence one has about something, based on the information he has. As more data becomes available, a process of Bayesian updating allows the values of such subjective probabilities to be properly adjusted.

This view is useful when frequentism fails, as in the election or stock market examples, but it is harder to apply when probabilities appear to have a more objective nature, as in the decay of an atom. In truth, the distinction between objective and subjective probabilities is not as irreconcilable as it seems. It can be argued that, under certain rational constraints, and with enough information, they should not differ significantly.
Still, it is odd to define a decay probability in terms of one's willingness to bet on it. If he decides based on what Physics says are the objective probabilities, it brings back the problem of what these mean. If we define objective probabilities to be the values to which his subjective ones converge through Bayesian updating, as he learns from a large number of trials, it also leads to the question of what it is that he is learning about.
If Bayesian probabilities are subjective estimates based on incomplete information, what are they estimating, objective probabilities?

As discussed, in EQM a crude frequentist attempt to get probabilities by world counting gives wrong values, but fortunately it can not work as the number of worlds is ill defined.
Wallace \cite{Wallace2012} claims Bayesianism works better in EQM than classically, with rational constraints forcing subjective probabilities to agree with the Born rule, but he relies on dubious axioms \cite{Mandolesi2019}.
As we will show, EQM may indeed provide the best setting for a good probability concept, but it will be through other means.

\section{Complex Pythagorean Theorem}\label{sc:projection factors}

Some generalizations of the Pythagorean theorem involve areas or higher dimensional measures, usually relating the squared measure of a set to the squared measures of its orthogonal projections  on certain families of subspaces. But in certain versions for complex spaces the measures are not squared, due to dimensional reasons. We present a particular case, referring to  \cite{Mandolesi_Pythagorean} for details and more general formulations.

Let $H$ be a complex Hilbert space, with Hermitian product  $\Prod{\cdot,\cdot}$.
Complex spaces have underlying real ones, with twice the dimension, and the real part of $\Prod{\cdot,\cdot}$ gives a real inner product. As $\C$-orthogonality (with respect to $\Prod{\cdot,\cdot}$) implies $\R$-orthogonality (with respect to $\operatorname{Re}\Prod{\cdot,\cdot}$), orthogonal projections onto  complex subspaces coincide for both products.

A \emph{complex line} is a complex subspace $L\subset H$ with $\dim_\C L=1$, and the complex line of a nonzero $v\in H$ is $\C v=\{cv:c\in\C\}$.
It is isometric to a real plane, so we measure its subsets using the 2-dimensional Lebesgue measure (roughly speaking, the area), which we denote by $\left|\cdot\right|$.

\begin{definition}
The \emph{projection factor} of a complex line $L\subset H$ on a complex subspace $W\subset H$ is
\begin{equation*}\label{eq:projection factor}
\pi_{L,W}=\frac{|P(U)|}{|U|}, 
\end{equation*}
where $U\subset L$ is any Lebesgue measurable subset with $0<|U|<\infty$, and $P:L\rightarrow W$ is the orthogonal projection.
\end{definition}

As $P$ is linear, $\pi_{L,W}$ does not depend on the choice of $U$.

\begin{proposition}\label{pr:Pv pi}
Given $v\in H$ and a complex subspace $W\subset H$, let $P:\C v\rightarrow W$ be the orthogonal projection. Then
\begin{equation}\label{eq:Pv projection factor}
\|Pv\|^2=\|v\|^2\cdot\pi_{\C v,W}.
\end{equation}
\end{proposition}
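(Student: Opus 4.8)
The plan is to reduce the statement to a one-line computation by exploiting that the restriction of $P$ to the complex line $\C v$ is $\C$-linear, and then invoking how a $\C$-linear map of complex lines rescales the $2$-dimensional Lebesgue measure $|\cdot|$. First I would dispose of the degenerate case $Pv=0$: here $P$ vanishes identically on $\C v$, since $P(cv)=cPv=0$ for all $c\in\C$, so $|P(U)|=0$ for every measurable $U\subset\C v$, hence $\pi_{\C v,W}=0$ and both sides of \eqref{eq:Pv projection factor} are zero.

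Assuming $Pv\neq 0$, I would set $e=v/\|v\|$ and $f=Pv/\|Pv\|$, unit vectors spanning $\C v$ and $\C Pv\subset W$ respectively. The maps $\C\to\C v$, $z\mapsto ze$, and $\C\to\C Pv$, $w\mapsto wf$, are linear isometries, hence push the standard Lebesgue measure on $\C\cong\R^2$ forward to $|\cdot|$ on each line. Under these identifications $P$ becomes multiplication by the positive real scalar $\lambda=\|Pv\|/\|v\|$, because $Pe=Pv/\|v\|=\lambda f$. Since multiplication by a real scalar $\lambda$ on $\R^2$ has Jacobian determinant $\lambda^2$, we get $|P(U)|=\lambda^2\,|U|$ for every measurable $U\subset\C v$ of finite measure. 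Taking any such $U$ with $0<|U|<\infty$ in the definition of the projection factor yields $\pi_{\C v,W}=|P(U)|/|U|=\lambda^2=\|Pv\|^2/\|v\|^2$, and multiplying through by $\|v\|^2$ gives \eqref{eq:Pv projection factor}.

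The only points needing care are the measure-theoretic bookkeeping — that a unit vector in a complex line induces a measure-preserving identification with $\R^2$, and that the Jacobian of $z\mapsto\lambda z$ on $\R^2$ is $\lambda^2$ rather than $\lambda$ (this factor of $2$ in the exponent is precisely what makes the ``complex Pythagorean theorem'' unsquared) — together with the separate treatment of the degenerate case; I do not expect any genuine obstacle beyond these routine verifications. An alternative coordinate-free route would note that $P|_{\C v}$ is a conformal linear map with constant stretch factor $\|Pv\|/\|v\|$ and apply the change-of-variables formula for Lebesgue measure on the plane directly.
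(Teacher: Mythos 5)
Your proof is correct and rests on the same idea as the paper's: the restriction of $P$ to $\C v$ is $\C$-linear, hence a conformal scaling by $\|Pv\|/\|v\|$, and $2$-dimensional Lebesgue measure scales by the square of that factor. The paper simply instantiates this with the single test set $U$ given by the square on $v$ and $\im v$ (of area $\|v\|^2$), whose image is the square on $Pv$ and $\im Pv$ (of area $\|Pv\|^2$), while you compute the Jacobian in general and treat the degenerate case $Pv=0$ separately.
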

\begin{proof}
As $P$ is $\C$-linear and $W$ is complex, the square $U\subset \C v$ of sides $v$ and $\im v$ projects to the square $P(U)\subset W$ of sides $Pv$ and $\im Pv$.
\end{proof}

\begin{definition}
An \emph{orthogonal partition} of $H$ is a collection $\{W_i\}$ of mutually orthogonal closed complex subspaces such that $H=\bigoplus_i W_i$.
\end{definition}

\begin{theorem}[Complex Pythagorean Theorem]\label{th:complex Pythagorean}
Given a nonzero $v\in H$ and an orthogonal partition $H=\bigoplus_i W_i$,
\begin{equation}\label{eq:sum factors}
\sum_{i} \pi_{\C v,W_i} = 1.
\end{equation}
So, for any measurable set $U\subset \C v$, 
\begin{equation*} 
\sum_{i} |P_i(U)| = |U|,
\end{equation*}
where $P_i:\C v\rightarrow W_i$ is the orthogonal projection. 
\end{theorem}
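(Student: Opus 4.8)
The plan is to reduce the area statement to the numerical identity \eqref{eq:sum factors} and to prove that identity by relating projection factors to squared norms. By Proposition~\ref{pr:Pv pi}, for each $i$ we have $\|P_i v\|^2 = \|v\|^2 \cdot \pi_{\C v, W_i}$, so after dividing by $\|v\|^2 \neq 0$ it suffices to show $\sum_i \|P_i v\|^2 = \|v\|^2$. This is exactly the usual (real or complex) Pythagorean / Parseval identity for the orthogonal decomposition $v = \sum_i P_i v$ associated to the orthogonal partition $H = \bigoplus_i W_i$: the summands are mutually orthogonal, the sum converges to $v$ in $H$ (this is what $H = \bigoplus_i W_i$ with closed mutually orthogonal $W_i$ means, together with the fact that $P_i$ restricted to $\C v$ agrees with the ambient orthogonal projection onto $W_i$, as noted in the preliminary discussion that $\C$-orthogonal projections coincide for $\Prod{\cdot,\cdot}$ and its real part), and hence $\|v\|^2 = \sum_i \|P_i v\|^2$ by continuity of the norm and orthogonality. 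This gives \eqref{eq:sum factors}.

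For the second display, I would argue as follows. Fix a measurable $U \subset \C v$; if $|U| = 0$ or the relevant quantities are handled trivially, assume $0 < |U| < \infty$ first and then remove the restriction. By the definition of the projection factor, $|P_i(U)| = |U| \cdot \pi_{\C v, W_i}$ for every $i$ (the definition is independent of the chosen test set, so it applies to $U$ itself). Summing over $i$ and using \eqref{eq:sum factors},
\begin{equation*}
\sum_i |P_i(U)| = |U| \sum_i \pi_{\C v, W_i} = |U|.
\end{equation*}
For a set of infinite measure, write $U$ as a countable increasing union of finite-measure measurable pieces and pass to the limit using monotone convergence on both sides, or simply note both sides are $+\infty$. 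For $|U| = 0$ each $|P_i(U)| = 0$ since $P_i$ is linear (Lipschitz) and maps null sets to null sets, so the identity holds trivially.

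The only genuine subtlety — and the step I would be most careful about — is the convergence and orthogonality bookkeeping when the index set $\{i\}$ is infinite: one must invoke that $\{P_i v\}$ is a mutually orthogonal family whose partial sums converge to $v$, so that $\|v\|^2 = \sum_i \|P_i v\|^2$ with unconditional convergence, and that the countably many projection factors sum accordingly. Everything else is the definition of $\pi_{\C v, W_i}$ plus linearity of $P_i$. If the paper intends $\{W_i\}$ finite or at most countable (as "$H = \bigoplus_i W_i$" with closed subspaces suggests), this is standard Hilbert space theory and the proof is short: combine Proposition~\ref{pr:Pv pi}, the Pythagorean identity for $v = \sum_i P_i v$, and the definition of projection factor.
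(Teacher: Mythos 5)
Your proof is correct and follows essentially the same route as the paper's: the paper's entire proof is to combine Proposition~\ref{pr:Pv pi} with the Hilbert-space identity $\|v\|^2 = \sum_i \|P_i v\|^2$, exactly as you do. The extra bookkeeping you supply (convergence for an infinite index family, the null and infinite-measure cases) is sound but goes beyond what the paper records.
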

\begin{proof}
Follows from \autoref{pr:Pv pi}, as $\|v\|^2 = \sum\limits_{i} \left\|P_i v \right\|^2$.
\end{proof}

If $v$ is a quantum state vector and $W$ is an eigenspace of an observable, \eqref{eq:Pv projection factor} shows $\pi_{\C v,W}$ equals the probability given by the Born rule for the corresponding eigenvalue, and \eqref{eq:sum factors} corresponds to the condition of unit total probability. Of course, this still does not mean projection factors are probabilities, specially in EQM, which has no Born rule to begin with. Reaching such interpretation will require a few more steps.

\section{Universes and worlds}\label{sc:universes and worlds}

First, let us describe our structure of universes and worlds, to avoid any confusion with our terminology. 
In the usual Everettian view, there is a single quantum Universe, in a state $\Psi$, with a branch decomposition $\Psi = \sum_{i} \psi_i$ in terms of orthogonal states $\psi_i$ representing quasi-classical worlds. 
We will have, instead, an infinity of identical quantum universes, each with an ill-defined number of distinct quasi-classical worlds. When we take the point of view of one specific universe, we write a capitalized `Universe'.

This structure will result from adding a new postulate to EQM. Some say this violates Everett's purpose, but his ideal of deriving all of quantum theory from its unitary part should not be a shackle. The ultimate goal is to find a coherent theory that agrees with experiments. If it requires augmenting `pure EQM' with new assumptions, and these are reasonable, so be it.
And this is not unprecedented: Deutsch \cite{deutsch1985quantum} has proposed an axiom for EQM which is, in fact, similar to ours. In section \ref{sc:Deutsch} we compare his proposal to ours. 

The \emph{ray} of $\Psi$ is $R_\Psi=\{c\Psi\colon c\in\C^*\}$, i.e. its complex line minus the origin.

\begin{postulate}
There is a continuum infinity of universes, one in each state of a ray $R_\Psi$ (where $\Psi$ is the quantum state of one of them).
\end{postulate}

In section \ref{sc:Physical hypotheses} we discuss why this postulate is reasonable, and whether it might follow from more basic hypotheses.

Since the states of the universes differ only by a complex factor, they are experimentally indistinguishable, i.e. experiments can not tell them apart.
In CQM this followed from the measurement postulate, and in our case it will follow from the linearity of their evolution.
For short, we use the term \emph{identical} for `experimentally indistinguishable'. This agrees with its use in \cite{deutsch1985quantum} or in the expression `identical particles', but differs from the philosophical concept of identity we discuss in section \ref{sc:Identity}.

Decomposing $c\Psi = \sum_{i} c\psi_i$ we get that, as each $c\Psi\in R_\Psi$ represents an actual universe, each $c\psi_i$ corresponds to an actual world in it. So the set of all worlds of type $i$ in all universes corresponds to a ray $R_i = \{c\psi_i: c\in\C^*\}$ of identical worlds, one in each universe.
In other words, the ray of universes $R_\Psi$ decomposes into rays of worlds $R_i$.
Linearity of \Schrodinger's equation implies that all universes evolve and branch in the same way (\autoref{fig:branching_universes}).

\begin{figure}
\centering
\includegraphics[width=.8\linewidth]{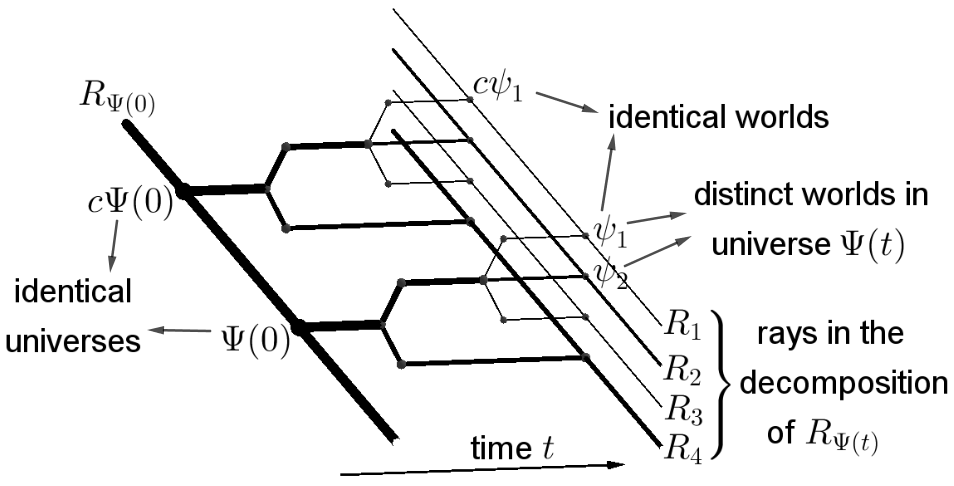}
\caption{Points in ray $R_{\Psi(0)}$ are identical universes, branching into worlds in the same way. The final ray $R_{\Psi(t)}$ is decomposed into rays of worlds. Line thickness indicates Born weight.}
\label{fig:branching_universes}
\end{figure}

\subsection{Relative amounts of worlds}\label{sc:Relative amounts}

We want to quantify, in some sense, the relative amount of worlds of type $i$. This requires a measure, and with continuum infinities of worlds a counting one makes even less sense than before. One might consider counting rays instead of worlds, but Wallace's argument applies: the number of rays is ill-defined, depending on the choice of coarse-graining in the branch decomposition.

Each ray is isometric to an Euclidean plane (minus the origin), and its points are all equivalent. This geometry makes it natural to use the 2-dimensional Lebesgue measure $\left|\cdot\right|$, which is invariant by the group of transformations that preserve the Hilbert space metric (as are positive multiples of $\left|\cdot\right|$, but this does not affect our conclusions).

As a whole ray has infinite measure, we sample a subset of universes $U\subset R_\Psi$ with  $0 < |U| < \infty$. The set of worlds of type $i$ in universes of $U$ is $W_{U,i}=P_i(U)$, where $P_i\colon \C\Psi \rightarrow \C\psi_i$ is the orthogonal projection, and the set of all worlds in all universes of $U$ is a disjoint union $W_U=\bigcup_i W_{U,i}$. Since, by the complex Pythagorean theorem, $|W_U|=\sum_i|W_{U,i}|=|U|$, the fraction $f_{U,i}$ (as measured by $\left|\cdot\right|$) of worlds of type $i$ among all worlds of $U$ is
\begin{equation*} 
	f_{U,i} = \frac{|W_{U,i}|}{|W_U|} = \frac{|P_i(U)|}{|U|} = \pi_{\C\Psi,\C\psi_i}.
\end{equation*}

The factor $\pi_{\C\Psi,\C\psi_i}$ is independent of $U$, so even if the total amounts of worlds in the $R_i$'s can not be compared using their  full measures, which are infinite, their relative amounts  are well defined by the following quantifiers.
\begin{definition}
The \emph{fraction of worlds} of type $i$ (among all worlds in all universes) in $R_\Psi$ is $f_{\Psi,i} = \lim\limits_{r\rightarrow\infty} f_{U_r,i}$, where $U_r=\{\psi\in R_\Psi:\|\psi\|<r\}$. The \emph{percentage of worlds}  of type $i$ in $R_\Psi$ is $f_{\Psi,i}\cdot 100\%$.
The \emph{density of worlds} of type $i$ in $R_\Psi$ is the number $\delta_{\Psi,i}$ such that $|W_{U,i}|=\delta_{\Psi,i}\cdot|W_U|$ for any measurable subset $U\subset R_\Psi$.
\end{definition}

As $f_{\Psi,i}=\delta_{\Psi,i}=\pi_{\C\Psi,\C\psi_i}$, they convey the same information, and which one to use is mainly a matter of linguistic preference, depending on the situation.
When we use expressions like `in almost no world', `in nearly all worlds', etc., we mean these quantifiers are close to 0, to 1, etc. 

By \autoref{pr:Pv pi} and \eqref{eq:Born unnormalized}, 
\[ f_{\Psi,i}=\frac{\|P_i \Psi\|^2}{\|\Psi\|^2}=\frac{\|\psi_i\|^2}{\|\Psi\|^2} = p_{\Psi,i}. \] 
So the probabilities given by the Born rule for quantum experiments equal the relative amounts of worlds with each result. In section \ref{sc:Quantum Fractionalism} we show this is not a coincidence, as these amounts can indeed be perceived as probabilities.

\begin{example}
Let $\Psi=c_1\psi_1+c_2\psi_2$ be the state of an universe after a measurement with two results,  represented by orthogonal world states $\psi_1$ and $\psi_2$. Just to frame our results in the usual way, states are normalized, so $|c_1|^2+|c_2|^2=1$. Any measurable set of universes $U\subset R_\Psi$ decomposes into a disjoint union $W_1\cup W_2$, where each $W_i$ is a set of worlds with result $i$ (\autoref{fig:complexpythagorean}). By \eqref{eq:Pv projection factor} the amounts of worlds of each type are $|W_1|=|c_1|^2\cdot |U|$ and $|W_2|=|c_2|^2\cdot |U|$, so that $|W_1\cup W_2| =|W_1|+|W_2| = |U|$, as in the complex Pythagorean theorem. Thus, in the set of all worlds of all universes, those with each result represent fractions $f_{\Psi,1}=\frac{|W_1|}{|W_1\cup W_2|}=|c_1|^2$ and  $f_{\Psi,2}=\frac{|W_2|}{|W_1\cup W_2|}=|c_2|^2$, with $f_{\Psi,1}+f_{\Psi,2}=1$, corresponding to the Born rule probabilities.
\end{example}

\begin{figure}
\centering
\includegraphics[width=.5\linewidth]{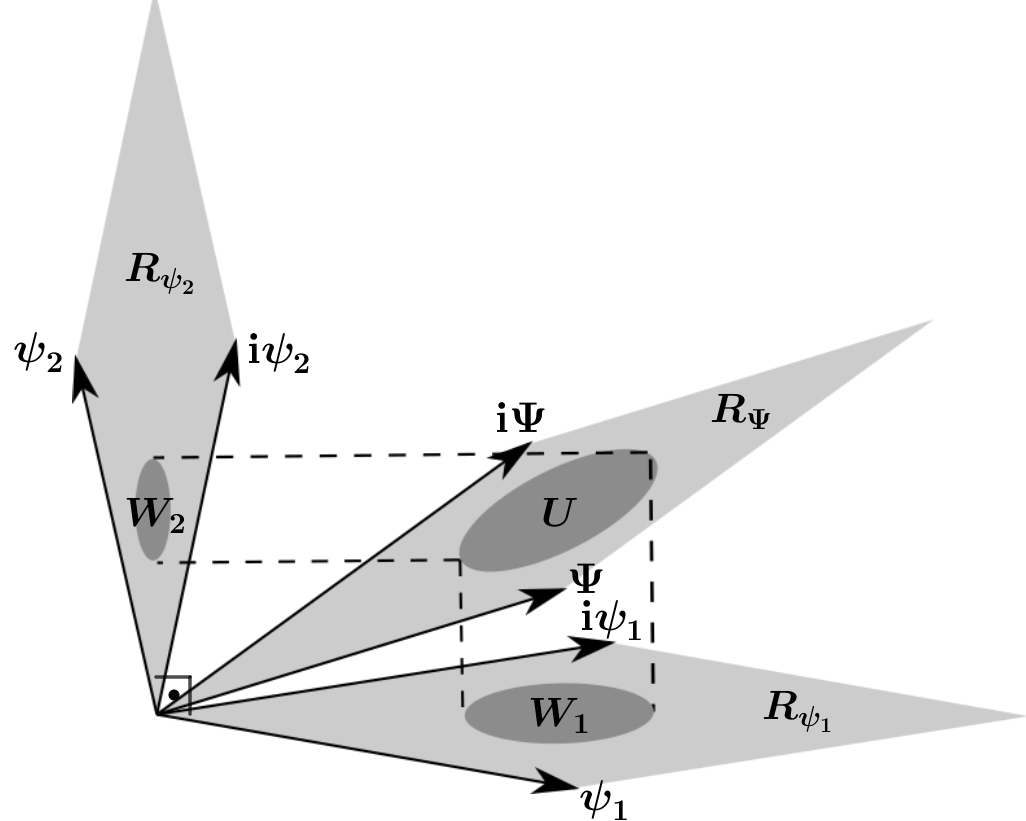}
\caption{Complex Pythagorean Theorem, $|U| = |W_1|+|W_2|$.}
\label{fig:complexpythagorean}
\end{figure}

We note that we have only talked about the $\psi_i$'s as representing worlds because this is the case of most interest. No assumption that these components be quasi-classical or have negligible interference has been used, and our method works with $\Psi$ decomposed in any orthogonal basis
(giving relative amounts of components).
Thus, as we do not rely on a solution to the preferred basis problem, our probabilities can justify using  decoherence to solve it.

\subsection{Non-complex Hilbert spaces}\label{sc:real Hilbert space}

Some authors \cite{Stueckelberg1960a,Finkelstein1962,Adler1995} have proposed real or quaternionic Hilbert spaces for quantum theory.
Before proceeding, we show that a complex space is crucial to get the right probabilities, if EQM and our method are valid.
This may help clarify the role played by the  $\R$-dimension of rays.
We avoid talking of states, worlds or universes here, to stress that our calculations are purely geometric, not requiring a preferred basis of worlds. 

In a real Hilbert space, let $\Psi = \sum_{i} \psi_i$ be a nonzero vector decomposed in components $\psi_i$ with respect to an arbitrary orthogonal basis.
We use the 1-dimensional Lebesgue measure, also denoted by $\left|\cdot\right|$, for subsets of a real ray $R_\Psi=\{c\Psi:c\in\R^*\}$, and take a subset $U\subset R_\Psi$ with  $0 < |U| < \infty$. 
For real lines the projection factor \cite{Mandolesi_Pythagorean} is $\pi_{\R\Psi,\R\psi_i}=\frac{\|\psi_i\|}{\|\Psi\|}$. If $P_i\colon \R\Psi \rightarrow \R\psi_i$ is the orthogonal projection, the set $W_{U,i}=P_i(U)$ of $i$-components (i.e. those in $R_{\psi_i}$) of elements of $U$ measures $|W_{U,i}|=\frac{\|\psi_i\|}{\|\Psi\|}\cdot |U|$, and the disjoint union $W_U=\bigcup_i W_{U,i}$ of all components of all elements of $U$ has $|W_U|= \frac{\sum_i\|\psi_i\|}{\|\Psi\|}\cdot |U|$. Note that $|W_U| \neq |U|$ now. Among all components of all elements of $U$, the $i$-components represent a fraction given by
\begin{equation*} 
	f_{U,i} = \frac{|W_{U,i}|}{|W_U|} = \frac{\|\psi_i\|}{\sum_j\|\psi_j\|}.
\end{equation*}
As before, this does not depend on the choice of $U$, so the fraction $f_{\Psi,i}$ of $i$-components among all components of all elements of $R_\Psi$ is well defined.

If $\Psi=\sum_i c_i\ket{i}$, where $\{\ket{i}\}$ is an orthonormal basis, we have
\[ f_{\Psi,i}=\frac{|c_i|}{\sum_j |c_j|}. \]
Similar calculations show that, since quaternionic rays have $\R$-dimension 4, in a quaternionic Hilbert space we would have
\[ f_{\Psi,i}=\frac{|c_i|^4}{\sum_j |c_j|^4}. \]

By arguments like those in the next section, such fractions would work as probabilities in real or quaternionic versions of EQM. The probability of result $i$ would depend not only on $c_i$ but also on the other coefficients, so this is another example that Gleason's hypothesis is not as natural as it may seem.
The complex case  is special in that $\sum_j |c_j|^2 = \|\Psi\|^2$, while for $p\neq 2$ the value of $\sum_j |c_j|^p$ depends not only on $\Psi$ but also on the decomposition basis.

\section{Quantum Fractionalism}\label{sc:Quantum Fractionalism}

Our last step is to show that relative amounts of worlds can really be interpreted as probabilities.

Let us review the different Everettian points of view in an example.
When an observer $\ket{O}$ measures, in the orthonormal basis $\{\ket{\uparrow},\ket{\downarrow}\}$, an electron spin in state $c_1\ket{\uparrow}+c_2\ket{\downarrow}$, with $|c_1|^2+|c_2|^2=1$ (for simplicity), the process leads to an entangled state
\begin{equation*} 
c_1\ket{\uparrow}\otimes\ket{O_\uparrow}+c_2\ket{\downarrow}\otimes\ket{O_\downarrow}.
\end{equation*}
A naive world counter would claim there are 2 worlds, one with each result, and make wrong predictions. 

Neo-Everettians, on the other hand, say worlds emerge through decoherence, so the final state must take into account the many ways the observer can get entangled with the environment $\ket{E}$, and we have 
\begin{equation}\label{eq:environment}
\sum_i c_{1i}\ket{\uparrow}\otimes\ket{O_{\uparrow i}}\otimes\ket{E_i}+\sum_j c_{2j}\ket{\downarrow}\otimes\ket{O_{\downarrow j}}\otimes\ket{E_j},
\end{equation}with $\sum_i |c_{1i}|^2=|c_1|^2$ and $\sum_j |c_{2j}|^2=|c_2|^2$. 
Coarse-graining separates each sum into worlds having negligible interference, and the number of worlds in each will depend on the chosen fineness of grain. As this number is not an objective feature, it can not be used to say EQM leads to the wrong statistics. 

For us, each complex multiple of \eqref{eq:environment} represents one in a continuum infinity of identical universes, all of which actually exist. As each one is decomposed into worlds, we get an ill-defined number of continua of worlds for each result, with the total fraction (density, if one prefers) of those with  $\ket{\uparrow}$  being $f_\uparrow=|c_1|^2$, while those with $\ket{\downarrow}$ have $f_\downarrow=|c_2|^2=1-f_\uparrow$.

If the experiment is repeated $N$ times, the total fraction of worlds with $n$ ups and $N-n$ downs will be  $\binom{N}{n} \cdot f_\uparrow^n \cdot(1-f_\uparrow)^{N-n}$, a binomial distribution with parameter $p=f_\uparrow$. Thus, for large $N$, the distribution of world fractions, in terms of the frequency $f=\frac{n}{N}$ of ups, will be sharply peaked at $f=f_\uparrow$, with variance $\sigma^2=\frac{f_\uparrow f_\downarrow}{N}$. 
Every possible sequence of results will occur in some world (actually, in continuum infinities of them), but in nearly 99.7\% of them $f$ will deviate from $f_\uparrow$ by at most $3\sigma$.
In section \ref{sc:epistemic problem} we discuss the small percentage of maverick worlds in which results deviate too much.

So, when we use observed frequencies to measure probabilities in a quantum experiment, we are actually measuring the fractions with which worlds branch at each run of the experiment. By \emph{Quantum Fractionalism (or Densitism)} we mean this interpretation of quantum probabilities as being in fact branching fractions or densities. It provides a physically objective concept of probability, at least for quantum experiments, and it might even apply to probabilities in classical settings, if they have quantum origins \cite{Albrecht2014}.

As seen, for multiple trials it provides a realization of all frequentist possibilities, each occurring in the correct fraction of worlds. If we measure $10\,000$ spins in state $\frac{\sqrt{3}}{2}\ket{\uparrow}+\frac{1}{2}\ket{\downarrow}$, every sequence of results will happen in an infinity of worlds, but in 99.7\%  of them the relative frequency of ups will be close to 75\% ($\pm 3\sigma\cong 1.3\%$). Note that the circularity that affected frequentism is not present here, as this 99.7\% is not a likelihood of obtaining $75\% \pm 1.3\%$ of ups (this will certainly happen in infinitely many worlds), but rather a well defined measure of the fraction of worlds in which it will happen.

Quantum Fractionalism works equally well for single quantum trials, with no need to appeal to Bayesian subjective probabilities. 
If I bet a single measurement of $\frac{\sqrt{3}}{2}\ket{\uparrow}+\frac{1}{2}\ket{\downarrow}$ will result up, saying I have a 75\% `chance' of winning has a concrete meaning: after universes branch, I will have won in precisely (disregarding experimental errors) 75\% of all resulting worlds. In section \ref{sc:epistemic problem} we discuss other cases in which our interpretation is an alternative to Bayesianism.

Note that, even if these probabilities have the same values as in CQM, and lead (in most worlds) to the same statistics, their meaning is different. We are used to probabilities referring to alternative random possibilities, but these refer to coexisting certain ones, and must be interpreted for what they are: relative amounts of actual worlds corresponding to each possibility. 

Proponents of EDT have gone to great lengths to ensure rational agents in EQM or CQM would behave in the same way. But knowledge that all results will happen (even if in tiny fractions, in some cases) can and should affect decisions, at least by changing the utility of rewards.
Whatever utility function one might use for decisions in `normal' probabilistic settings, it should be adjusted in the Everettian case to account, for example, for any sadness a winning version of an agent might feel from knowing other versions of herself suffered a loss.

\subsection{Epistemic problem}\label{sc:epistemic problem}

In our interpretation, the half-life of caesium-137 being around 30 years means that, after such period, in nearly all worlds approximately half the atoms in a sample of this material will have decayed. But there will also be an infinity of worlds (albeit representing an extremely low fraction) in which all atoms will have decayed.
Is this reasonable?

In CQM this is a real possibility, only extremely unlikely, so having in EQM a tiny fraction of worlds in which it happens might not be so strange.
Still, perhaps due to the human tendency to equate very low probability with impossibility, it seems less than satisfying that all sorts of unbelievable events should always take place in lots of worlds.

This relates to the \emph{epistemic problem}: if everything that is possible will happen in some worlds, how can observations confirm or refute some quantum hypothesis, or even EQM itself?

This problem is particularly serious for EDT \cite{maudlin2019philosophy}.
Consider a problem of deciding between hypotheses X and Y, with X implying an experimental result will happen with high Born weight, and Y giving it low weight. Regardless of which is correct, there will be worlds with and without the result. If I observe it, I know my branch has high (resp. low) weight if X (resp. Y) is correct, but I can not tell which is the case, since there is no way to measure such weight.
And as EDT ascribes no  post-measurement meaning to this weight, there seems to be no reason why I should give X more credence.

Wallace \cite{Wallace2012} has developed a theory of Everettian inference (assuming the validity of EDT). However, while EDT only says  pre-measurement decisions can neglect future branches of low Born weight, he tries to do the same post-measurement (e.g. in \cite[p.\,202]{Wallace2012}), which is not justifiable in the theory.

In Quantum Fractionalism this is less problematic, as Born weights can be interpreted post-measurement as relative amounts of worlds.
Suppose Alice measures a spin in state $\ket{\uparrow}+\ket{\downarrow}$. If her result is $\uparrow$ (labeled as $A_\uparrow$ in \autoref{fig:AliceBob}) she prepares a large number of spins in state $\ket{\uparrow}$, otherwise ($A_\downarrow$) she prepares them in state $\ket{\uparrow}+\ket{\downarrow}$. By measuring these spins, one at a time, Bob is to decide whether he is in an $A_\uparrow$ or $A_\downarrow$ world. Once he gets a $\downarrow$ result he knows for sure (disregarding experimental errors) he is in $A_\downarrow$, so let us focus on what he can infer if he gets $\uparrow$ every time.

\begin{figure}[h!]
\centering
\includegraphics[width=0.6\linewidth]{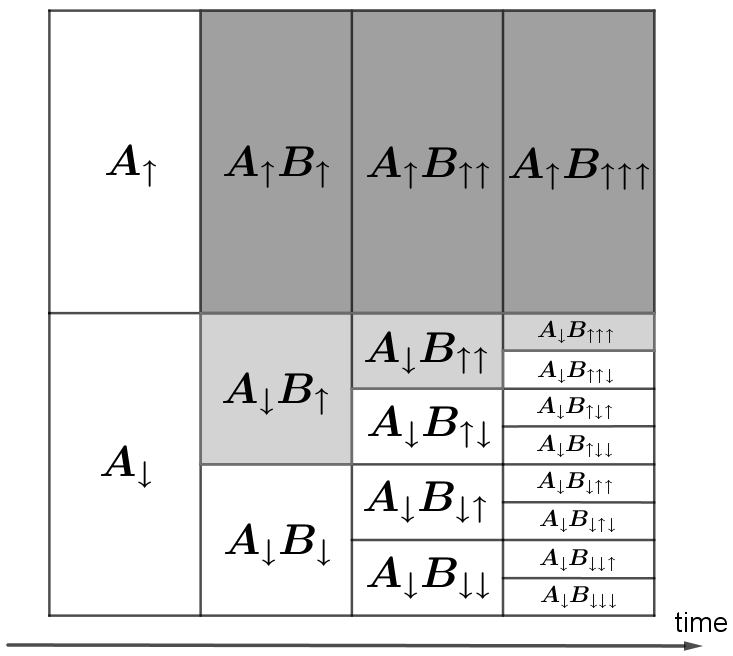}
\caption{World distribution of Alice and Bob's results. Columns show the situation after each measurement, with areas proportional to the fractions of worlds with each sequence of results. In shaded worlds Bob has gotten only $\uparrow$'s, and in the darker ones Alice got $\uparrow$ too.}
\label{fig:AliceBob}
\end{figure}

His first measurement results $\uparrow$ (labeled as $B_\uparrow$) in 75\% of all worlds, with $2/3$ of these in $A_\uparrow$, so at this time he should ascribe a $66.7\%$ credence to being in $A_\uparrow$. With another $\uparrow$ this should rise to $80\%$, as $5/8$ of all worlds have $B_{\uparrow\uparrow}$, and $A_\uparrow$ remains as $1/2$ of all worlds. After a sequence of $n$ $\uparrow$'s, his credence will be $\frac{2^n}{2^n+1}$, rapidly approaching 100\%. With 10 $\uparrow$'s he should be 99.9\% sure of being in $A_\uparrow$, with such credence having an objective meaning: 99.9\% of all worlds in which he got such sequence are $A_\uparrow$.

Still, in $1/2^{n+1}$ of all worlds Bob will be extremely confident of being in $A_\uparrow$, but he will be wrong. Even if these worlds form only a tiny fraction of the set of all worlds, they do not seem any less real than the rest. In CQM there is a chance experimental results are misleading, but we have learned to treat it as zero if it is small enough.
It may be harder to accept that EQM guarantees we will be misled in some worlds.

There might be a way to rid EQM of such maverick worlds, at least in some cases.
As suggested in \cite{Mandolesi2017}, branches with weight many orders of magnitude lower than similar ones might not form stable quasi-classical worlds, as interference from the larger ones precludes macroscopic causality in them. 
Such idea still requires more work to be made precise, but if correct there might be no $\ket{A_\downarrow B_{\uparrow\uparrow\cdots\uparrow}}$ world for $n$ large enough.
So, in the quantum case, extremely low probability (fraction, to be precise) might indeed equal impossibility.

Even if these worlds do exist, this does not mean experimental results can not be trusted anymore, bringing science to a halt. We just have to interpret correctly what they mean, having in mind that trust is not absolute, but a matter of degree. 
In a situation as the one above, reaching 99.9\% credence in an hypothesis means that in 99.9\% of all admissible worlds (those with a supporting set of experimental results) this is the right conclusion, but there is a $0.1\%$ chance we might be in a branch where experiments have deceived us (i.e. this will surely happen in $1/1000$ of all admissible worlds). If this chance is not small enough, we can reduce it with more experiments, but it will never be $0$, and we may have to live with the fact that there is a non null chance  we have been misled. 
We may not like the idea of following the scientific method and still being fooled like this, but the Universe is under no obligation of ensuring we are always guided towards the truth. Maybe we should feel lucky enough that this will happen in the vast majority of worlds. 

Let us consider now the problem of confirming EQM itself. Can a theory predicting everything will happen be tested? Should we take any experimental result as evidence in its favor, or should we reject it for not being falsifiable?
Again, we must have in mind that, in general, experimental evidence does not prove or disprove a hypothesis or theory in absolute terms, it only increases or decreases our confidence in them. 

Suppose we measure a large number of spins in state $\frac{\sqrt{3}}{2}\ket{\uparrow}+\frac{1}{2}\ket{\downarrow}$, and based on the results we are to decide between EQM, CQM and Naive Branch Counting (NBC), a branching theory in which each measurement of this state produces 2 worlds, one for each result. Let us consider what we can conclude if we obtain a sequence of results having, approximately:
\begin{enumerate}[a)]
\item 0\% of $\uparrow$'s and 100\% of $\downarrow$'s;
\item 50\% of $\uparrow$'s and 50\% of $\downarrow$'s;
\item 75\% of $\uparrow$'s and 25\% of $\downarrow$'s.
\end{enumerate}

These are real possibilities in all theories.
In EQM, (a) and (b) happen in tiny fractions of worlds, while (c) happens in most of them. In CQM, (a) and (b) are unlikely, and (c) is expected to happen with high probability. And in NBC (a) and (c) happen in few worlds, with (b) happening in most.

Observation of (a) should cast doubt on all theories, specially CQM, but it does not disprove any of them. In case (b) we should give more credence to NBC than to EQM or CQM. And (c) would tell us to doubt NBC, and trust more EQM or CQM. 
But again, there is no guarantee this strategy will lead us to the right conclusion. If CQM is correct, we do not have to worry about (a) or (b), which are unlikely, but in the odd chance they happen, this might lead us to wrongfully reject the theory. If either EQM or NBC is right, there will certainly be worlds in which we are led to the wrong conclusion, though this will happen in a tiny minority of them.
And perhaps the best we can hope for is to be right in nearly all worlds.

If it is any consolation, and EQM is correct, physicists in branches (a) and (b) might at least have their illusions about CQM dispelled. In (c), any preference between EQM and CQM has to be based on their logical coherence.

These examples show that Quantum Fractionalism can do much of the work usually reserved for Bayesianism, at least in situations regarding quantum measurements, and it provides a more concrete interpretation of what the probabilities mean. Of course, this does not mean that all subjective probabilities can be reinterpreted in this way.
If I say I am 80\% convinced EQM is correct, this does not mean the theory only holds in 80\% of all branches.

\section{Deutsch's continuum of universes}\label{sc:Deutsch}

Before EDT, Deutsch \cite{deutsch1985quantum,Deutsch1997} had also tried to solve the probability problem by postulating a continuum of universes. In this section we compare our proposal to his, to dispel any  misunderstandings.
Terminologies differ, so we write world$_{\scriptscriptstyle{D}}$, branch$_{\scriptscriptstyle{D}}$ or universe$_{\scriptscriptstyle{D}}$ when these terms are used in his sense. 

For us, there is an infinity of identical universes, each consisting of a quantum superposition of different worlds or branches. For Deutsch, there is a single world$_{\scriptscriptstyle{D}}$ (or \emph{multiverse}), which is an infinite set of universes$_{\scriptscriptstyle{D}}$, with branches$_{\scriptscriptstyle{D}}$ being subsets of identical universes$_{\scriptscriptstyle{D}}$.

He defines the world$_{\scriptscriptstyle{D}}$ as everything that exists, postulating that it consists of a continuum infinity of universes$_{\scriptscriptstyle{D}}$, and that this set has a measure (for simplicity, let the total measure be $1$). 
The state of the world$_{\scriptscriptstyle{D}}$ is described by a normalized $\Psi=\sum_i c_i\psi_i$, where the $\psi_i$'s are states of a preferred orthonormal basis (Deutsch's \emph{interpretation basis}).
The meaning attributed to $\Psi$ is that the set of universes$_{\scriptscriptstyle{D}}$ is divided into disjoint subsets (branches$_{\scriptscriptstyle{D}}$), each with measure $|c_i|^2$ and consisting of identical universes$_{\scriptscriptstyle{D}}$ in state $\psi_i$ (\autoref{fig:Deutsch}).

\begin{figure}
\centering
\begin{subfigure}[t]{0.48\textwidth}
\includegraphics[width=\textwidth]{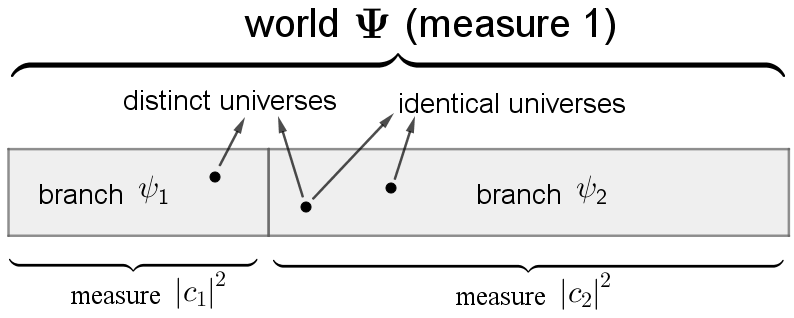}
\caption{Deutsch's model. The world in state $\Psi$ is a set of universes, with measure $1$, divided into  subsets of identical universes, the branches $\psi_1$ and $\psi_2$, with measures $|c_1|^2$ and $|c_2|^2$.}
\label{fig:Deutsch}
\end{subfigure}
\hspace{.02\textwidth}
\begin{subfigure}[t]{0.48\textwidth}
\includegraphics[width=\textwidth]{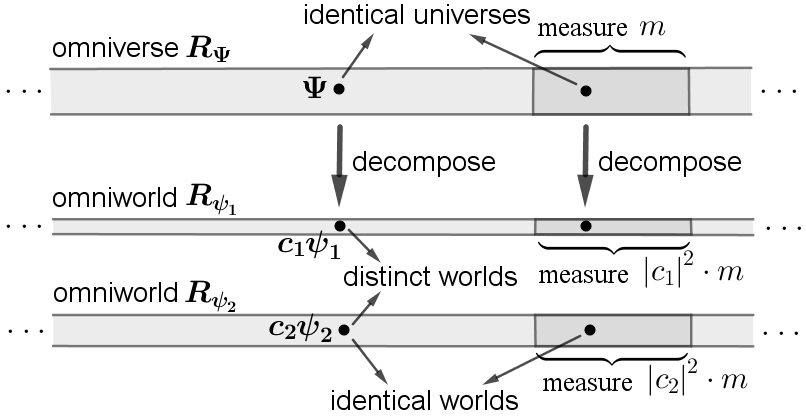}
\caption{Our model. Omniverse is a ray $R_\Psi$ of identical universes, of infinite measure. An universe $\Psi$ decomposes as worlds $c_i\psi_i$, and $R_\Psi$ decomposes as rays $R_{\psi_i}$. Subset of measure $m$ in $R_\Psi$ decomposes into subsets of measures $|c_i|^2\cdot m$ in the $R_{\psi_i}$'s.}
\label{fig:ours}
\end{subfigure}
\caption{Interpretation of $\Psi=c_1\psi_1+c_2\psi_2$ in both models.}
\label{fig:models}
\end{figure}

As Deutsch puts it, this `builds in' the probabilistic interpretation into EQM. To get branches$_{\scriptscriptstyle{D}}$ with the desired probabilities, he simply postulates them to be sets of identical universes$_{\scriptscriptstyle{D}}$, with appropriate sizes.
In a process like \eqref{eq:device branches}, the world$_{\scriptscriptstyle{D}}$ has initially a single branch$_{\scriptscriptstyle{D}}$ (for simplicity), with a continuum infinity of identical universes$_{\scriptscriptstyle{D}}$, all of which are supposed to actually exist. Then it evolves into another ensemble of universes$_{\scriptscriptstyle{D}}$, not all identical, in which those of type $i$ form a subset of measure $|c_i|^2$. 

This model reflects well experimental results, but it is not much of an improvement over CQM. The ensemble is supposed to evolve deterministically according to \Schrodinger's equation, but it is not clear what happens to individual universes$_{\scriptscriptstyle{D}}$. 
One view is that universes$_{\scriptscriptstyle{D}}$ that were identical diverge, evolving into different ones, in distinct branches$_{\scriptscriptstyle{D}}$. But as there is no explanation of how this happens, or what determines which universes$_{\scriptscriptstyle{D}}$ will end up in each branch$_{\scriptscriptstyle{D}}$, it seems to bring probabilistic collapses back into EQM, or some sort of hidden-variables.
Another possible view is that identical universes$_{\scriptscriptstyle{D}}$ do evolve in the same way, with identical new sets of universes$_{\scriptscriptstyle{D}}$ sprouting from each of them, divided in branches$_{\scriptscriptstyle{D}}$ with the appropriate measures. Again, it is unclear how this would happen.

Another limitation of Deutsch's approach is that it needs the interpretation basis to tell which branches$_{\scriptscriptstyle{D}}$ are present in the ensemble, so even if it worked there would be a circularity in using its probabilities to solve the preferred basis problem via decoherence.

For us, the set of everything that exists (call it \emph{omniverse}) is also a continuum of universes, but now individual universes have their own quantum states. But these differ only by complex factors, so the universes are identical, and there is a correspondence between the omniverse and a ray $R_\Psi$ (\autoref{fig:ours}). 
Evolution via \Schrodinger's equation takes rays to rays, so all universes remain identical at all times.
Our continuum also has a measure, but it is naturally inherited from the Hilbert space metric through the correspondence with $R_\Psi$.

While Deutsch's branches$_{\scriptscriptstyle{D}}$ were subsets of universes$_{\scriptscriptstyle{D}}$ which remained identical after a measurement, our branches or worlds are a substructure of the universes, forming a new level in the structure of the omniverse. Each world corresponds to a component of the state of one of the universes. As all universes are identical, they have identical world structures, and the set of all worlds of type $i$ (call it an \emph{omniworld}) in all universes corresponds to a ray $R_{\psi_i}$.

In our omniverse composed of a quantum superposition of omniworlds, the complex Hilbert space geometry decomposes each subset of measure $m$ in $R_\Psi$ into subsets of measures $|c_i|^2\cdot m$ in the $R_{\psi_i}$'s. So each kind of world is present in the correct ratio to give the Born rule probabilities, without us having to postulate the measures.

And, as our calculations do not really require a decomposition into worlds, working with any decomposition basis, there is no circularity in using our probabilities to solve the preferred basis problem.

\section{Why a ray of universes?}\label{sc:Physical hypotheses}

Many physicists dislike EQM for its worlds whose existence we can not verify (unless, perhaps, we can detect tiny interferences between them), and they will frown at the idea of an infinity of universes which can not be detected in any way, as they do not  interfere (although we may consider the Born rule as indirect evidence that they exist).
So, our postulate will be contested, and the obvious question is not only why there is a continuum of universes, but why specifically a ray, and not some other subset of Hilbert space. 

The easy answer is that, so far, the only coherent explanation we have for the Born rule in EQM requires this particular configuration of universes. As any theory, ours depends on postulates. If its internal logic is sound, and it agrees with experiments, should this not be good enough?

If this argument sounds a little too Copenhagen-esque, that is because it is. 
In fact, many old arguments in defense of CQM can be recycled for our use, with the advantage that EQM, with its probability and preferred basis problems solved, is more theoretically consistent than CQM ever was. Our postulate is simpler than the measurement one, does not depend on ill-defined terms like `measurement', and may lead to a better understanding of both quantum theory and probability. This should be reason enough to consider its validity. 

Still, history shows this kind of reasoning may serve us well for a while (even a long one), but eventually we must overcome it, as lack of answers for basic questions sooner or later starts to hamper progress.
So, in this section, we try to do a little better, speculating whether our postulate can be obtained from more fundamental hypotheses.

We will examine our postulate from two perspectives. First, assuming its validity, we ask in sections \ref{sc:Identity} and \ref{sc:Ultradeterminism} what insights it might bring us.
Our conclusion is that the following ideas should be called into question, and in section \ref{sc:ray of Universes} we show our postulate might follow from their negation:
\begin{enumerate}
\item Complex multiples of a quantum state vector are different mathematical representations of the same physical state.

This idea stems from the fact that all observable properties of $\Psi$ hold for any $c\Psi\in R_\Psi$. Many physicists see this as a mathematical redundancy with no physical implications, and some eliminate it using  $R_\Psi$  (or, equivalently, the density operator of a normalized pure $\Psi$) to describe a quantum state.
For them, the fundamental quantum space is not a Hilbert space $H$, but a \emph{projective Hilbert space} $\PP(H) = \{$rays of $H\}$, whose geometry provides nice interpretations for concepts like the Berry phase \cite{Bengtsson2017}. But as linear combinations of rays are undefined, the superposition principle is replaced by a decomposition one \cite{Boya1989}: instead of writing $\Psi=\sum_i \psi_i$ one says $R_\Psi$ decomposes in the $R_{\psi_i}$'s, with probabilities given by squared cosines of angular distances between them. This makes even proponents of such approach often revert to adding quantum states.

\item The existence of the Universe may be the result of a random event.

This possibility appears in theories of quantum cosmology \cite{Carr2007}, in which the Universe emerges from  quantum vacuum fluctuations. 
\end{enumerate}

\subsection{Identity and indistinguishability}\label{sc:Identity}

Instead of using `identical' as shorthand for `experimentally indistinguishable', in this section we use it in the philosophical sense of `one and the same', or `numerically the same', with `different' being its opposite. 

As our postulate requires an universe for each state in a ray, this suggests the existence of all these states is not redundant, playing a fundamental role in the theory. 
So we are led to the following conjecture:
\begin{itemize}
\item[C1)] Different elements of a ray $R_\Psi$ represent different, but experimentally indistinguishable, physical states.
\end{itemize}
We shall even speculate that no physical principle, even those we do not know yet, makes any distinction between states $\Psi$ and $c\Psi$ ($c\in\C^*$). 

But can indistinguishable states be considered different?
Yes, since difference is a more fundamental relation than distinguishability, which depends on the means available. 
Even if we can not prove the states represented by $\Psi$ and $c\Psi$ are different, this does not imply they are the same.
Their norms and global phases may be real physical features, which we simply can not measure. Physical laws may forbid experimental access to such properties, but this does not mean they do not exist. Relativity keeps us from observing parts of the Universe that are too far away, and still they are assumed to exist.

The philosophical debate about what it means to say things $x$ and $y$ are identical is an old one. Leibniz's principle of Identity of Indiscernibles claims that for them to be different one must possess a predicate the other does not. So, if $x$ and $y$ are qualitatively the same, they are numerically the same.
But it is not so simple: the principle is a tautology unless we exclude certain predicates, like `is identical to $x$', and much of the debate revolves around which ones to allow. Black \cite{black1952identity} has proposed  as counterexample a symmetric universe with only 2 perfect spheres of same radius, as there is no way to distinguish them, and yet they are not one. Quantum theory, with its fermions and bosons, has complicated matters: electrons being fundamentally indistinguishable does not mean they are one single particle present in multiple locations. Reviews of ideas surrounding this problem, and lists of references, can be found in \cite{Forrest2016,French2019}.

For us, the problem boils down to whether predicates like `has norm 2' can hold for quantum states, even if we can not test their validity.  
In CQM this is underdetermined, as in its Hilbert space formulation states have norms, but $\PP(H)$ is an equally valid model in which they do not (or have norm 1). In EQM, the Born rule may be indirect evidence that quantum states do have norms and global phases, even if these are not observable, for we need $\Psi$ and $c\Psi$ ($c\neq 1$) to be numerically distinct in order to get the right probabilities.
If each ray is reduced to a single point in $\PP(H)$, we get a counting measure instead.

%

\subsection{Deterministic existence}\label{sc:Ultradeterminism}

Discarding CQM, all fundamental laws of Physics are deterministic, so in EQM there seems to be no room for randomness.
Even the idea that the Universe appeared due to random quantum fluctuations should be reinterpreted in terms of an initial state $\Psi_0$ with a superposition of all possible beginnings, which would evolve into a superposition of all possible worlds. 

In fact, being governed by \Schrodinger's equation, the evolution of the Universe is not only deterministic but also backwards deterministic (the state at any time determines the states at all previous times) and linear (so all universes in a ray evolve in tandem).
Thus, if our postulate is valid and at time $t$ there is a ray $R_t$ of universes, tracing their evolution backwards to the Big Bang (assuming it happened, and something like EQM holds in its extreme conditions) we infer that all universes started in the same ray $R_0$.
This is hardly fortuitous, so $R_0$ must contain some special \emph{Big Bang or seed state} $\Psi_0$. 

It seems reasonable to assume $\Psi_0$ not only can, but actually must, give rise to an universe, or we would need a probabilistic process to determine whether or not this happens. 
As Everettians say, in EQM anything that can happen, will happen.
Also, whatever makes $\Psi_0$ special must be independent of the universe it brings forth, preceding it (whatever this means) as some physical principle. 

We formalize these ideas as another conjecture: 
\begin{itemize}
\item[C2)] The initial state $\Psi_0$ of the Universe was not random. Instead, some physical principle determines an universe must sprout from $\Psi_0$. 
\end{itemize}

This may seem too unorthodox. In Physics, initial conditions are usually set at will, and determinism only guides evolution after such starting point. But applying this to the Universe is problematic: unless we veer into religion, or accept that the initial conditions of the Universe may simply have no cause (sending science into a dead end), there is no way these could have been set up but through some physical principle. Of course, this only pushes back the problem to the cause of physical principles, but nonetheless our job is to push it back as far as we can.

If there was no Big Bang and the universes always existed, their trajectories in Hilbert space are curves with no starting point, which by our postulate are passing through all points in a ray $R_\Psi$. It is again hard to imagine this is an accident, so we are led to a more general version of C2 (which can also be adapted to accommodate a `block universe' relativistic perspective):
\begin{itemize}
\item[C2')] The existence of the Universe is determined by Physics, i.e. there is a physical reason for its existence in a specific state at a given time. 
\end{itemize}

\subsection{Lo and behold, a ray of universes}\label{sc:ray of Universes}

Now we work in the other direction, assuming our conjectures are valid. 

From C2 we get a seed state $\Psi_0$ and some physical principle making it give rise to an universe. By C1, any other $c\Psi_0\in R_{\Psi_0}$ is a different state, which no physical law distinguishes from $\Psi_0$. Hence the same physical principle applies to it, giving rise to a different, but indistinguishable, universe.
Thus $R_{\Psi_0}$ is a whole ray of baby universes.
Determinism and linearity make them evolve in synchrony: if at time $t$ an universe is in state $\Psi(t)$, the states of all universes will form the ray $R_{\Psi(t)}$, as in our postulate.
A similar argument holds with C2', leading to the same conclusion.

Granted, there are big holes in all this.
Our present knowledge does not allow us to be certain about whether the universe has a quantum state; what it means to say it exists; why it exists; if there can be more than one; whether it had a beginning; what exactly are time and evolution; and whether there is any sense in saying the universes evolve in synchrony.

But these difficulties are not specific to EQM or to our approach, and if Physics is ever to address such questions we must advance hypotheses and see if they lead to reasonable conclusions.
A point in favor of our conjectures, besides their simplicity, is that they might explain our postulate, which leads to the Born rule. And, perhaps, the fresh perspective they provide might even bring forth new insights into those questions.

Some see the resilience of the measurement postulate as an indication that it can not be so wrong, but it may be the opposite: it threw physicists so off track that it has been hard to find the way back.
By contrast, it was easier to look beyond our postulate and find more fundamental assumptions that might explain it, even if the path connecting them is not so clear yet. We take this as an indication that we are on the right track.

\section{Final remarks}\label{sc:Conclusion}

Our postulate allows quantum probabilities to be interpreted, in EQM, as relative amounts of worlds. This solution of the probability problem does not require a branch decomposition in quasi-classical worlds, working with any orthogonal basis. Hence there is no circularity in using our probabilities to justify the decoherence based solution of the preferred basis problem. Thus, the main objections to EQM are eliminated. 

Popper's falsifiability principle has also been invoked against EQM, with the claim that it makes no testable predictions differing from CQM. But if they make the same predictions, any test of one is a test of the other, and historical antecedence does not make one more falsifiable than the other. 
If CQM had been proposed after EQM, quite possibly it would be the one accused of being unfalsifiable.
Anyway, as more sophisticated experiments test quantum effects at ever larger scales, and `pure EQM' is supplemented by new ideas, it is possible that in the future it can be tested against CQM \cite{Deutsch1986,Plaga1997,Page1999}.

Even if EQM is in fact experimentally equivalent to CQM, its logical structure, with those problems solved, is superior. Among its advantages, we have: 
it has no ill-defined concept of measurement causing drastic changes in quantum behavior for no clear reason; 
observers have no special status; 
there is no mysterious colapse of the wavefunction; 
entanglement involves no `spooky action at a distance';
its limit of applicability is not restricted by some unknown boundary; 
it encompasses classical mechanics.
And it may have greater explanatory power, possibly helping clarify even the nature of classical probabilities.

Critics who wield Occam's razor against EQM will like even less the idea of a continuum of identical universes, each with lots of distinct worlds. But objections against unnecessarily complex explanations only apply if a simpler theory explains the same facts, while also being theoretically sound. Simply claiming a theory with infinitely many universes is needlessly complex is like attacking modern astronomy for requiring billions of galaxies to explain observations. To reject astronomy one would have to come up with a better explanation for the same observations, and the same standard should hold for EQM: to reject it one must present an alternative which explains the same experiments, and whose internal logic is in some sense better.
Besides CQM and EQM, many other interpretations of quantum theory have been proposed, like hidden-variables, Bohmian mechanics, the Ghirardi-Rimini-Weber theory, consistent and decoherent histories, etc., all of which face various problems \cite{Auletta2000,Home1997,Wheeler2014}.

Pragmatists tend to be satisfied with CQM for its correct description of experiments, dismissing EQM for its untestable elements and for bringing no obvious experimental advantages that justify switching from CQM. But as EQM allows us to get the same results in a more coherent theoretical framework, it provides clearer ways to analyze quantum problems. And if all one cares about are measurable results, it is irrelevant whether there really are many worlds or not. From a pragmatic perspective, one is free to consider them as just mathematical artifacts of a formalism which, nevertheless, churns out correct results in a simpler way.

As discussed, our postulate suggests conjectures which require some common physical ideas to be reexamined.
But the history of quantum theory is full of ideas (like Bohr's atomic model or Planck's quanta of energy) which challenged the physical knowledge of the time. Some even turned out to be wrong, but provided the seed for other developments, until there were enough new ideas to form a whole new paradigm.
Granted, those ideas were only accepted as they led to testable predictions.
But we can again blame historical tardiness: if EQM and our postulate had appeared before CQM, their predictions would have been confirmed by all experiments. 
Anyway, that problems of CQM remain a century later suggests another paradigm change may be necessary, and new ideas should deserve serious consideration, even if they go against old conceptions. Even if they turn out to be not quite right, they might still provide  clues of what to look for.

\section*{Acknowledgments}

The author wishes to thank the anonymous referees for their helpful criticisms and suggestions.


\providecommand{\bysame}{\leavevmode\hbox to3em{\hrulefill}\thinspace}
\providecommand{\MR}{\relax\ifhmode\unskip\space\fi MR }
\providecommand{\MRhref}[2]{%
  \href{http://www.ams.org/mathscinet-getitem?mr=#1}{#2}
}
\providecommand{\href}[2]{#2}

\end{document}